\theoremstyle{plain}
\newtheorem{theorem}{\bf{Theorem}}
\newtheorem{lemma}{\bf{Lemma}}
\newtheorem{assumption}{\bf{Assumption}}
\theoremstyle{remark}
\newtheorem{remark}{\bf{Remark}}
\newenvironment{proofsketch}{%
  \proof}{\endproof}
\renewcommand{\raggedright}{\leftskip=0pt \rightskip=0pt plus 0cm} 
\def\BibTeX{{\rm B\kern-.05em{\sc i\kern-.025em b}\kern-.08em
    T\kern-.1667em\lower.7ex\hbox{E}\kern-.125emX}}
\begin{document}

\title{Asynchronous Semi-Decentralized Federated Edge Learning for Heterogeneous Clients}
\author{\IEEEauthorblockN{Yuchang Sun$^*$, Jiawei Shao$^*$, Yuyi Mao$^\dag$, and Jun Zhang$^*$}
\IEEEauthorblockA{$^*$Dept. of ECE, The Hong Kong University of Science and Technology, Hong Kong\\
$^\dag$Dept. of EIE, The Hong Kong Polytechnic University, Hong Kong \\
Email: \{yuchang.sun, jiawei.shao\}@connect.ust.hk,
yuyi-eie.mao@polyu.edu.hk,
eejzhang@ust.hk}
}
\maketitle

\begin{abstract}
Federated edge learning (FEEL) has drawn much attention as a privacy-preserving distributed learning framework for mobile edge networks. In this work, we investigate a novel semi-decentralized FEEL (SD-FEEL) architecture where multiple edge servers collaborate to incorporate more data from edge devices in training. Despite the low training latency enabled by fast edge aggregation, the device heterogeneity in computational resources deteriorates the efficiency. This paper proposes an asynchronous training algorithm for SD-FEEL to overcome this issue, where edge servers can independently set deadlines for the associated client nodes and trigger the model aggregation. To deal with different levels of staleness, we design a staleness-aware aggregation scheme and analyze its convergence performance. Simulation results demonstrate the effectiveness of our proposed algorithm in achieving faster convergence and better learning performance.
\end{abstract}

\begin{IEEEkeywords}
Federated learning (FL), device heterogeneity, asynchronous training, mobile edge computing (MEC).
\end{IEEEkeywords}

\section{Introduction}
The rapid developments of artificial intelligence (AI) and Internet of Things (IoT) are boosting the revolution of multiple industries such as intelligent manufacturing, smart healthcare, and home automation. The driving force behind the success of AI comes from the massive data generated by pervasive mobile and IoT devices \cite{verma2017survey}. However, most sensory data contain privacy-sensitive information, and as a consequence, offloading them to the Cloud for centralized data analytics will breach the data privacy requirements \cite{frustaci2017evaluating}. In 2017, Google proposed a privacy-preserving distributed machine learning (ML) framework, namely \emph{federated learning} (FL) \cite{mcmahan2017communication}, where the client nodes (e.g., mobile and IoT devices) are coordinated by a Cloud-based parameter server (PS) to collaboratively train ML models without disclosing their local data. To reduce the training latency, FL was also fused with the emerging paradigm of \emph{mobile edge computing} (MEC) \cite{mao2017survey}, creating a new FL architecture dubbed \emph{federated edge learning} (FEEL) \cite{lim2020federated} that keeps the entire training process within the edge of wireless networks.

Prior studies on FEEL were largely restricted to synchronous training, where the client nodes perform the same amount of local training and upload their model updates to an edge server simultaneously for model aggregation \cite{wang2019adaptive,chen2020convergence}. Nevertheless, in many real-life use cases of FEEL, the client nodes, e.g., unmanned vehicles, smartphones, and wearable devices, may be of diversified computational resources, including processing speed, battery capacity, and memory usage \cite{lim2020federated}. Therefore, it may take a longer time for client nodes with less capable computation strength to complete their local training and make them become stragglers for model uploading, which unnecessarily idles the fast client nodes and thereby slows down the training process. Notably, the drawback of synchronous FEEL becomes more prominent for large-scale implementations \cite{yang2021characterizing,bonawitztowards}. 

To accelerate the training process, asynchronous FL has gained increasing attention for its advantages in dealing with device heterogeneity \cite{xie2019asynchronous,ma2021fedsa,wu2020safa}. It allows each client node to upload the model updates independently and the PS to aggregate the received models in an event-trigger fashion. The first asynchronous training algorithm for FL was proposed in \cite{xie2019asynchronous}, where the PS performs model aggregation whenever it receives an update from the client nodes. In this way, model aggregation takes place more frequently at a higher communication cost. To strike a balance between model improvement and latency in each training round, a semi-asynchronous FL mechanism was developed in \cite{ma2021fedsa}, where the PS delays the model aggregation until sufficient local model updates are collected. Nevertheless, asynchronous training may degrade the learning performance since the stale local models from the straggling client nodes may be poisonous for the global model aggregation. This issue was relieved by the design in \cite{wu2020safa}, which allows most of the client nodes to stay asynchronous, whereas those with up-to-date or deprecated local models are forced to synchronize with the edge server. However, when being implemented over wireless networks, the benefits of asynchronous FL are bottlenecked by the limited coverage of a single edge server, which makes it difficult to accommodate a large number of client nodes and leverage their local data.

Recent works considered deploying multiple edge servers to engage more client nodes in the training process of FEEL so that massive distributed data resources can be utilized. In particular, a client-edge-cloud hierarchical FEEL system was proposed in \cite{liu2020client}, where the edge servers collect model updates from the client nodes with lower communication latency, and the Cloud-based PS performs global model aggregation to involve more training data. Nevertheless, even with infrequent communication with the Cloud, hierarchical FEEL still incurs too high latency, especially when millions of model parameters need to be transferred. To exempt communication with the Cloud, a novel FEEL architecture, namely \emph{semi-decentralized federated edge learning} (SD-FEEL) was investigated in \cite{sun2021semi,castiglia2020multi}, which enables fast model exchange among edge servers in replacement of global model aggregation at the Cloud. However, preliminary studies on SD-FEEL focused on synchronous training, which still suffers from low efficiency due to the straggler effect caused by the heterogeneous devices.

In this paper, we propose an asynchronous training algorithm for SD-FEEL, where each edge server sets a deadline to collect model updates from its associated client nodes for \emph{intra-cluster model aggregation}. Then, the intra-cluster aggregated model is shared with the neighboring edge servers for \emph{inter-cluster model aggregation}. In this way, client nodes can perform local training continuously without having to wait for the stragglers. As a result, the local computational resources can be fully exploited and the efficiency of training in SD-FEEL is thus optimized. We prove the convergence of the proposed asynchronous training algorithm, and our theoretical analysis also demonstrates how the convergence rate of asynchronous SD-FEEL is degraded by device heterogeneity. Simulation results corroborate our analysis and show that asynchronous SD-FEEL converges to a model with better accuracy and faster than synchronous SD-FEEL.

The organization of this paper is as follows. In Section \ref{sec:system}, we introduce the system model of SD-FEEL considering heterogeneous devices. In Section \ref{sec:training}, we develop an asynchronous training algorithm for SD-FEEL and Section \ref{sec:convergence} shows its convergence. Simulation results are presented in Section \ref{sec:experiment} and this paper is concluded in Section \ref{sec:conclusion}.

\section{System Model}\label{sec:system}

We consider an SD-FEEL system consisting of $C$ client nodes (denoted as set $\mathcal{C}$) and $D$ edge servers (denoted as set $\mathcal{D}$). Each client node is associated with an edge server, and it is assumed that each edge server has at least one associated client node. The client nodes thus form $D$ edge clusters and each edge cluster is coordinated by an edge server, acting as the parameter server (PS) in FL. Specifically, edge cluster $d$ comprises edge server $d$ and a subset of the client nodes, denoted as $\mathcal{C}_d$ (i.e., $\cup_{d\in\mathcal{D}}{\mathcal{C}_d}=\mathcal{C}$). The edge servers are partially inter-connected via high-speed links, and the connectivity can be captured by a connectivity matrix $\mathbf{G} \!\triangleq\!  \{g_{d,j}\} \!\in\! \{0,1\}^{D\!\times\! D}$, where $g_{d,j}=1$ means edge servers $d$ and $j$ are connected and otherwise, $g_{d,j}=0$. Denote $\mathcal{N}_d\triangleq \{j\in\mathcal{D} | g_{d,j} \!=\! 1\}$ as the set of one-hop neighbors of edge server $d$. 

Each client node has a set of local training data, denoted as $\mathcal{S}_i = \{\bm{\xi}_{j}^{(i)}\}_{j=1}^{|\mathcal{S}_i|}, i\in \mathcal{C}$, where $\bm{\xi}_{j}^{(i)}$ is the $j$-th data sample at client node $i$. The collection of data samples at the set of client nodes $\mathcal{C}_{d}$ is denoted as $\Tilde{\mathcal{S}}_d$, and the training data at all the client nodes is denoted as $\mathcal{S}$. We define $\hat{m}_i \triangleq \frac{|\mathcal{S}_i|}{|\Tilde{\mathcal{S}}_d|}$, $m_i \triangleq \frac{|\mathcal{S}_i|}{|\mathcal{S}|}$, and $\Tilde{m}_d \triangleq \frac{|\Tilde{\mathcal{S}}_d|}{|\mathcal{S}|}$. The client nodes collaborate to train a shared DL model $\bm{w}\in\mathbb{R}^M$ without disclosing their local data, where $M$ is the number of trainable parameters. The loss of data sample $\bm{\xi}$ is defined as a function of the model parameters, denoted as $f(\bm{\xi}; \bm{w})$, which can be, for example, the categorical cross-entropy between the predicted label and the ground truth for classification tasks. The objective of SD-FEEL is to minimize the loss over all the training data samples by optimizing $\bm{w}$, i.e., $\min_{\bm{w}\in\mathbb{R}^{M}} \left\{ F(\bm{w})\triangleq \sum_{i\in\mathcal{C}} m_i F_i(\bm{w}) \right\}$, where $F_i(\bm{w}) \triangleq \frac{1}{|\mathcal{S}_i|} \sum_{j\in\mathcal{S}_i} f(\bm{\xi}_j^{(i)}; \bm{w})$.

Same as \cite{sun2021semi}, we assume the local data across different client nodes are non-independent and identically distributed (non-IID). Besides, the real-life scenarios where client nodes have diversified computational resources will be investigated. We denote the computation speed at client node $i$ as $h_i$, which is in the unit of floating point operations per second (FLOPS). In other words, the number of local training epochs that client node $i$ can perform in a given amount of time is proportional to $h_i$ \cite{wang2020tackling}. To characterize the device heterogeneity, we introduce the \emph{heterogeneity gap} defined as $H\triangleq \max_{i,j\in\mathcal{C}}\frac{h_i}{h_j}$, of which, a larger value implies a higher degree of computational resource imbalance among the client nodes. Note that when $H\!=\!1$, all client nodes have identical computational speed, and synchronous training is viable for SD-FEEL systems \cite{sun2021semi,castiglia2020multi}. However, with $H \gg 1$, assigning equal number of local training epochs for all the client nodes as the operations in synchronous SD-FEEL will degrade the training efficiency considerably, since the fast client nodes have to wait until all the straggling client nodes complete their local model updates. In the next section, we propose an asynchronous training algorithm for SD-FEEL, where the client nodes perform different numbers of local epochs in each training round according to their computation speeds.

\section{Asynchronous Training in SD-FEEL}\label{sec:training}
In SD-FEEL, each training iteration includes three key steps, namely local model update, intra-cluster model aggregation, and inter-cluster model aggregation. To reduce the training latency, we design an asynchronous training algorithm for SD-FEEL that allows each edge cluster proceed to the next training iteration once it completes model sharing with the neighboring edge clusters. 
We denote the iteration counter as $k$, which gives the total number of training iterations that have been finished by all the edge clusters.
Fig. \ref{fig:flow} illustrates the training process of asynchronous SD-FEEL, and the system operations of the three key steps in each iteration are detailed as follows. 
\begin{figure*}[t]
\setlength\belowcaptionskip{-9pt}
    \centering
    \includegraphics[width=\linewidth]{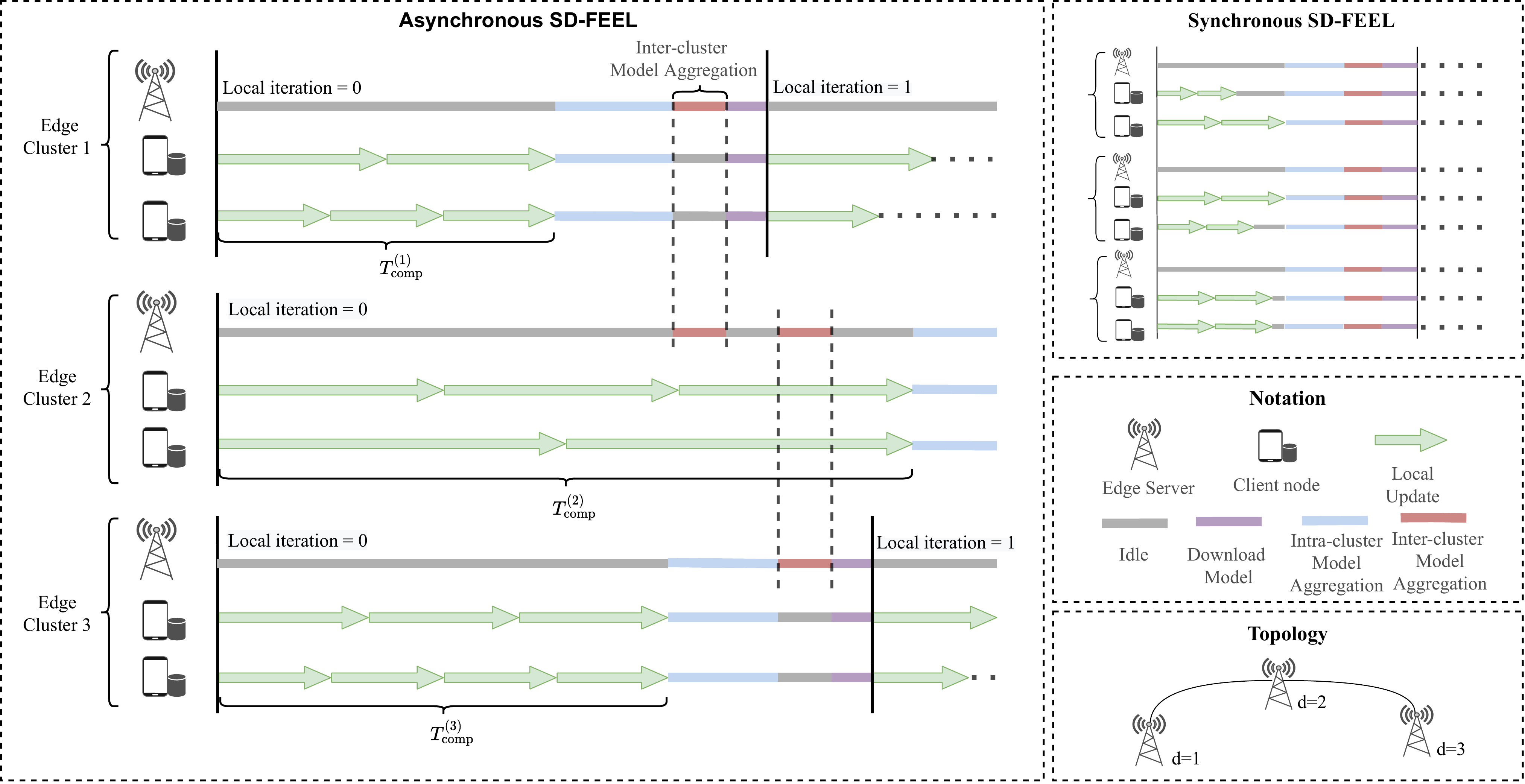}
    \caption{An illustration of asynchronous (left) and synchronous (top-right) SD-FEEL. In asynchronous SD-FEEL, the client nodes in edge cluster $d\in\{1,2,3\}$ perform local model updates for a duration of $T_\text{comp}^{(d)}$ before uploading the model updates to the associated edge server. The edge server then aggregates the received models from the client nodes in its own cluster and shares the aggregated model with its one-hop neighbors. In synchronous SD-FEEL, the client nodes are required to perform the same number of local epochs, where the fast client nodes need to stay idle until all the client nodes complete their local training before intra-cluster and inter-cluster model aggregations.}
    \label{fig:flow}
\end{figure*}

\subsubsection{Local Model Update} We assume edge server $d$ presets a deadline $T_\text{comp}^{(d)}$ for local model updates in each training iteration according to computational resources of its associated client nodes $\mathcal{C}_d$. While it is beyond the scope of this paper, we note that the value of $T_{\text{comp}}^{(d)}$ should ensure sufficient local training and avoid high model divergence \cite{wang2019adaptive}. Within the duration of $T_\text{comp}^{(d)}$, client node $i$ performs $\tau_i \!=\! \beta h_i$ epochs of mini-batch stochastic gradient descent (SGD), where $\beta$ is related to the complexity of training task and the batch size. Denote the model on client node $i$ at the beginning of local training epoch $l$ in the $k$-th global iteration as $\bm{w}_{k,l}^{(i)}$. Thus, we have
\begin{equation}
\bm{w}_{k,l+1}^{(i)} \!\!\leftarrow\! \bm{w}_{k,l}^{(i)} \!-\! \eta g(\bm{\xi}_{k,l}^{(i)}; \bm{w}_{k,l}^{(i)}), l \!\in \!\! \left\{0,1,\!\dots\!,\tau_i\!-\!1\right\}, i \!\in\! \mathcal{C},
\label{eq:local-update}
\end{equation}
where $\eta$ denotes the learning rate, and $g(\bm{\xi}_{k,l}^{(i)}; \bm{w}_{k,l}^{(i)})$ is the gradient computed on a randomly-sampled batch of local data $\bm{\xi}_{k,l}^{(i)}$. Since the numbers of local epochs vary among different client nodes, the local updates to be uploaded are normalized by $\tau_i$ in order to reduce model bias towards the fast client nodes \cite{wang2020tackling}, as follows:
\begin{equation}
    \bm{\Delta}_{k}^{(i)} \!\triangleq\! \frac{1}{\tau_i} \! \left(\bm{w}_{k,\tau_i}^{(i)} - \bm{w}_{k,0}^{(i)}\right) \!=\! - \frac{\eta}{\tau_i} \! \sum_{l=0}^{\tau_i-1} \! g(\bm{\xi}_{k,l}^{(i)}; \bm{w}_{k,l}^{(i)}), i\in \mathcal{C}. \vspace{-2pt}
\label{eq:delta}
\end{equation}

\subsubsection{Intra-cluster Model Aggregation} Once the deadline $T_\text{comp}^{(d)}$ of each training iteration arrives, the client nodes in edge cluster $d$ upload the normalized local updates $\bm{\Delta}_{k}^{(i)}$ as obtained in \eqref{eq:delta} to the associated edge server. Denote the model maintained by edge server $d$ at the beginning of the $k$-th global iteration as $\bm{y}_{k}^{(d)}$. The received local updates $\{\bm{\Delta}_{k}^{(i)}\}$'s are first weighted averaged using the weighting factors $\{\hat{m}_i\}$'s before being added to $\bm{y}_{k}^{(d)}$ for the implementation of gradient descent, which can be expressed as follows:
\vspace{-2pt}
\begin{equation}
    \bm{\hat{y}}_{k}^{(d)} 
    \leftarrow \bm{y}_{k}^{(d)} + \overline{\tau}_d \sum_{i\in\mathcal{C}_d} \hat{m}_i \bm{\Delta}_{k}^{(i)}, d \in \mathcal{D}.
    \vspace{-2pt}
\label{eq:intra}
\end{equation}
where $\overline{\tau}_d \triangleq \sum_{i\in \mathcal{C}_d} \hat{m}_i \tau_i$ is the weighted average of the numbers of local epochs completed by the client nodes.

\subsubsection{Inter-cluster Model Aggregation} After intra-cluster model aggregation, edge server $d$ shares the most updated model $\bm{\hat{y}}_{k}^{(d)}$ with its neighboring edge servers $\mathcal{N}_d$. Accordingly, the models maintained by these edge servers are updated as follows:
\vspace{-3pt}
\begin{equation}
    \bm{y}_{k}^{(j)} \leftarrow \sum_{j^\prime \in \mathcal{N}_{j}\cup\{j\}} p_{k}^{j^\prime,j} \bm{\hat{y}}_{k}^{(j^\prime)}, j \in \mathcal{N}_{d}\cup\{d\},
    \label{eq:inter}
    \vspace{-5pt}
\end{equation}

\noindent where $\mathbf{P}_{k} \triangleq \{p_{k}^{j^\prime,j}\}$ denotes the mixing matrix that may possibly change over different training iterations. It is worth noting that in synchronous SD-FEEL, $\mathbf{P}_{k}$ is a constant matrix over time determined by the connectivity among the edge servers \cite{castiglia2020multi}. However, in asynchronous training, when an edge server initiates inter-cluster model aggregation in the $k$-th global iteration, client nodes in neighboring edge cluster $j$ may be still training on models received in a previous global iteration $k^\prime(j)<k$, which are less valuable to the learning performance. Thus, we design a staleness-aware mixing matrix \cite{xie2019asynchronous}, of which, each element is non-increasing with the staleness $\delta_k^{(j)} \triangleq k - k^\prime(j)$ as shown in the following expression:
\begin{equation}
    p_{k}^{i,j} = \left\{
        \begin{array}{ll}
        \frac{\psi(\delta_k^{(j)})}{\Psi_j} & \text{if} \; j=d \; \text{and} \; i\in\mathcal{N}_{d}\cup\{d\}, \\
        p_{k}^{j,i} \; & \text{if} \; j \in\mathcal{N}_{d} \; \text{and} \; i=d,\\
        1-p_{k}^{d,j} \; & \text{if} \; j \; \in\mathcal{N}_{d} \; \text{and} \; i=j ,\\
        1 \; & \text{if} \; j \; \notin \mathcal{N}_{d}\cup\{d\} \; \text{and} \; i=j ,\\
        0, & \text{otherwise,}
        \end{array}
        \right.
\label{eq:pk}
\end{equation}
where $\psi(x)$ is a general non-increasing function of $x$ and $\Psi_k^{(j)} \triangleq \sum_{i\in\mathcal{N}_{j}\cup\{j\}} \psi(\delta_k^{(i)})$. The inter-cluster aggregated model $\bm{y}_{k}^{(d)}$ is then broadcasted to the client nodes in set $\mathcal{C}_d$, i.e.,
\vspace{-3pt}
\begin{equation}
    \bm{w}_{k+1,0}^{(i)} \leftarrow \bm{y}_{k}^{(d)}, i \in \mathcal{C}_d.
    \label{eq:broadcast}
    \vspace{-3pt}
\end{equation}

The above steps repeat until timeout or the values of local loss at all the client nodes cannot be further reduced. After that, the system enters a consensus phase where the edge servers exchange and aggregate models with their neighboring clusters. It was shown in \cite{sun2021semi} that model consensus among the edge servers can be achieved after sufficient rounds of such operations, i.e., the output model of asynchronous train is given as $\sum_{d\in\mathcal{D}} \Tilde{m}_d \bm{y}_{k}^{(d)}$, where $k$ is the global iteration index when the system enters the consensus phase. Note that as the consensus phase takes place only once, it shall induce negligible extra overhead. Details of the asynchronous training procedures for SD-FEEL are summarized in Algorithm \ref{alg-1}.
\begin{algorithm}[t]
\caption{Asynchronous Training for SD-FEEL} \label{alg-1}
\SetKwInput{KwInput}{Input} 
\SetKwInput{KwOutput}{Output}
\KwInput{The randomly initialized model $\bm{y}_{0}$}
\KwOutput{The final model $\sum_{d\in\mathcal{D}} \Tilde{m}_d \bm{y}_{k}^{(d)}$}
\SetKwFunction{Fupdate}{Update}
\SetKwProg{myproc}{Edge Server}{:}{}
  \myproc{}{
  Initialize all edge servers with the same model (i.e., $\bm{y}_{0}^{(d)} \!=\! \bm{y}_{0}, \, \forall d\in\mathcal{D}$), set $k \!=\! 0$\;
  \For{each edge server $d\in \mathcal{D}$ in parallel}{
    \Repeat{timeout or the values of the local loss functions cannot be further reduced}{
        \For{each client node $i\in \mathcal{C}_d$ in parallel}{
        $\bm{\Delta}_{k}^{(i)} \leftarrow$ \Fupdate{$\bm{y}_k^{(d)}$}\;
        }
        Perform intra-cluster model aggregation according to \eqref{eq:intra}\;
        Exchange the most updated model $\bm{\hat{y}}_{k}^{(d)}$ with its one-hop neighbors $\mathcal{N}_{d}$\;
        Perform inter-cluster model aggregation according to \eqref{eq:inter}\;
        Broadcast the most updated model $\bm{y}_{k}^{(d)}$ to the associated client nodes $i\in \mathcal{C}_d$ according to \eqref{eq:broadcast}\;
        Update $k \leftarrow k + 1$\;
    }
  }
  Enter the consensus phase\;
  \KwRet $\sum_{d\in\mathcal{D}} \Tilde{m}_d \bm{y}_{k}^{(d)}$\;
}
\SetKwProg{myprocc}{Client Node}{:}{}
  \myprocc{\Fupdate{$\bm{y}_k^{(d)}$}}{
  Initialize the local model, i.e., $ \bm{w}_{k,0}^{(i)} = \bm{y}_k^{(d)}$\;
  \Repeat{$T_\mathrm{comp}^{(d)}$ runs out}{
    Perform local training according to \eqref{eq:local-update}\;
  }
  Compute $\bm{\Delta}_{k}^{(i)}$ according to \eqref{eq:delta}\;
  \KwRet $\bm{\Delta}_{k}^{(i)}$ to edge server $d$\;}
\end{algorithm}

\section{Convergence Analysis}\label{sec:convergence}

To advance the convergence analysis of the proposed asynchronous training algorithm for SD-FEEL, we make the following assumptions on the loss functions that are commonly adopted in existing FL literature \cite{sun2021semi,lian2018asynchronous,liu2020client,wang2019adaptive}.
\begin{assumption}\label{ass-1}
For all $i\in \mathcal{C}$ and $\bm{w},\bm{w}^\prime \in\mathbb{R}^M$, we assume:
\begin{itemize}[leftmargin=1em]
    \item The local loss function is $L$-smooth, i.e.,
        \begin{equation}
        \left\| \nabla F_i(\bm{w}) - \nabla F_i(\bm{w}^\prime) \right\| \leq L \left\| \bm{w} - \bm{w}^\prime \right\|.
        \label{eq-smooth}
        \end{equation}
    \item The mini-batch gradient $g_i\left(\bm{\xi}; \bm{w}\right)$ is unbiased, i.e.,
        \begin{equation}
            \mathbb{E}_{\bm{\xi}|\bm{w}} [g_i(\bm{\xi};\bm{w})] = \nabla F_i(\bm{w}),
        \label{eq-gradient}
        \end{equation}
        and there exists $\sigma>0$ such that
        \begin{equation}
            \mathbb{E}_{\bm{\xi}|\bm{w}} \left[\left\| g_i(\bm{\xi};\bm{w}) - \nabla F_i(\bm{w}) \right\|^2\right] \leq \sigma^2.
        \label{eq-variance}
        \end{equation}
    \item There exists $\kappa>0$ such that
        \begin{equation}
            \left\| \nabla F_i(\bm{w}) - \nabla F(\bm{w}) \right\| \leq \kappa,
        \label{kappa}
        \end{equation}
        where $\kappa$ measures the non-IIDness of the training data across different client nodes.
\end{itemize}
\end{assumption}

We define an auxiliary global model at the $k$-th training iteration as a weighted average of the models maintained by the edge servers, i.e., $\bm{\overline{y}}_{k} \triangleq \sum_{d\in\mathcal{D}} \Tilde{m}_d \bm{y}_k^{(d)}$. The evolution of $\bm{\overline{y}}_{k}$ can be expressed as follows:
\begin{equation}
    \bm{\overline{y}}_{k+1} = \bm{\overline{y}}_{k} - \eta \mathbf{\hat{G}}_k \mathbf{\Lambda} \bm{\Tilde{m}}^\mathrm{T},
    \label{eq:server-change}
\end{equation}
where $\mathbf{\hat{G}}_k \triangleq \left[ \sum_{i\in\mathcal{C}_d} \frac{\hat{m}_i}{\tau_i} \sum_{l=0}^{\tau_i-1} g(\bm{\xi}_{k,l}^{(i)};\bm{w}_{k,l}^{(i)}) \right]_{d\in\mathcal{D}} \!\in\! \mathbb{R}^{M\!\times\! D}$, $\mathbf{\Lambda} \triangleq \text{diag}(\overline{\tau}_1, \overline{\tau}_2,\dots, \overline{\tau}_D)$, and $\bm{\Tilde{m}} \triangleq \left[ \Tilde{m}_d \right]_{d\in\mathcal{D}}$. Once any edge cluster completes an iteration, $k$ increases, while other edge clusters are utilizing stale models for local training as aforementioned. Thus, we define $\bm{a}_{\Tilde{k}}^{(d)} \triangleq \bm{a}_{k-\delta_k^{(d)}}^{(d)}$ (respectively $\bm{a}_{\Tilde{k}}^{(i)} \triangleq \bm{a}_{k-\delta_k^{(d)}}^{(i)}, i \!\in\! \mathcal{C}_d$) as the delayed model or gradient at the $d$-th edge server (respectively $i$-th client node) in the $k$-th iteration. A large value of $\delta_k^{(d)}$ implies edge cluster $d$ is utilizing an outdated model which has less contribution to the global training and hinders the convergence.
The following lemma shows that through the whole training process $\delta_k^{(d)}$ is upper bounded.
\begin{lemma}\label{lemma-delta}
There exists a constant $\delta_\text{max}$ such that $\delta_k^{(d)} \leq \delta_\text{max}, \forall k \in \mathbb{N}, d\in\mathcal{D}$.
\end{lemma}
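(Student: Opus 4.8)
The plan is to show that the staleness $\delta_k^{(d)}$ cannot grow unboundedly because the asynchronous schedule, although flexible, must eventually let every edge cluster participate. The key observation is that $\delta_k^{(d)} = k - k'(d)$ counts how many global iterations have elapsed (across all clusters) since cluster $d$ last contributed a fresh model through inter-cluster aggregation. If cluster $d$ had a very large staleness, that would mean many iterations were completed by other clusters while cluster $d$ performed no update at all — which contradicts the structure of Algorithm~\ref{alg-1}, where each cluster runs its \texttt{Repeat} loop continuously, incrementing $k$ upon every completed iteration.

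Concretely, I would argue as follows. First, fix a cluster $d$ and consider a window of global iterations. Because cluster $d$'s local-update deadline $T_\text{comp}^{(d)}$ is a preset finite quantity and the number of local epochs $\tau_i = \beta h_i$ is finite, cluster $d$ completes a full iteration (local update, intra-cluster aggregation, inter-cluster exchange, broadcast) in bounded wall-clock time, say at most $T^{(d)}_\text{iter}$. Meanwhile, the fastest any other cluster $j$ can complete an iteration is bounded below by some $T_\text{min} > 0$ (again because each iteration involves at least one SGD step plus communication of a nonzero-latency model). Hence, between two consecutive completions by cluster $d$, the total number of iterations finished by all clusters — i.e., the increment in $k$ — is at most $D \cdot \lceil T^{(d)}_\text{iter} / T_\text{min} \rceil$. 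Since $d$ refreshes its own contribution each time it completes an iteration, the staleness relevant to cluster $d$ when it appears as a neighbor in someone's inter-cluster aggregation is bounded by this quantity. Taking the maximum over all $d \in \mathcal{D}$ gives the desired uniform constant $\delta_\text{max}$.

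An alternative, more combinatorial route avoids wall-clock timing entirely: impose (or observe, as implicit in the algorithm's well-posedness) that the asynchronous schedule is \emph{fair}, meaning there is a bound $B$ on the number of global-iteration increments between consecutive completions of any single cluster. This is the standard partial-asynchrony assumption in the distributed-optimization literature. Under fairness, each time $k'(d)$ is updated, no more than $B$ further increments of $k$ can occur before it is updated again, so $\delta_k^{(d)} = k - k'(d) \le B$ for all $k$, and we set $\delta_\text{max} = B$.

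The main obstacle is making precise what ``bounded iteration time'' means without an explicit latency model in the paper — the excerpt only specifies $T_\text{comp}^{(d)}$ for the computation phase and leaves communication latencies unquantified. I expect the cleanest resolution is to state the fairness/partial-asynchrony condition explicitly (it is arguably already implicit in ``each training iteration includes three key steps'' each of which terminates), from which the bound is immediate; the timing argument is then just an informal justification that this condition holds in any real deployment. The remaining care is bookkeeping: confirming that $k'(d)$ genuinely advances to (at least) the value of $k$ at the moment cluster $d$ finishes its iteration, so that the gap $k - k'(d)$ is reset, not merely reduced.
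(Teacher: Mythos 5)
Your first argument is correct and is essentially the paper's own proof: the paper's sketch likewise bounds the number of iterations any other cluster can complete during one iteration of cluster $d$ (using the heterogeneity gap $H$ to bound the ratio of iteration times, where you use a generic $T^{(d)}_\text{iter}/T_\text{min}$) and sums these counts over the other edge servers to bound the gap. Your closing observations --- that communication latency is left unquantified and that the bound ultimately rests on an implicit partial-asynchrony/bounded-iteration-time condition --- are fair, since the paper's two-sentence sketch glosses over exactly these points.
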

\begin{proofsketch}
During one training iteration of the slowest edge cluster, any other edge clusters can finish at most $H$ iterations. 
The iteration gap is bounded by the sum of training iterations other edge servers have triggered.
\end{proofsketch}

We are now to bound the expected change of loss functions in consecutive iterations as shown in the following lemma.
\begin{lemma}\label{lemma-1}
The expected change of the global loss function in consecutive iterations is bounded as follows:
\begin{equation}
\begin{split}
    & \mathbb{E}F(\bm{\overline{y}}_{k+1}) - \mathbb{E} F(\bm{\overline{y}}_{k})
    \leq - \frac{1}{2} \eta\tau_{\mathrm{min}} \mathbb{E} \left\| \nabla F(\bm{\overline{y}}_k) \right\|^2 \\
    & - \frac{\eta}{2} (\tau_{\mathrm{min}} - \eta L  \tau_{\mathrm{max}}^2) J_k + \frac{1}{2} \eta^2 L \tau_{\mathrm{max}} H \sum_{i\in\mathcal{C}} m_i^2 \sigma^2 \\
    & +\frac{1}{2} \eta \tau_{\mathrm{min}} \underbrace{\mathbb{E} \left\| \nabla F(\bm{\overline{y}}_k) -  \nabla\mathbf{\hat{F}}_{\Tilde{k}} \bm{\Tilde{m}}^\mathrm{T} \right\|^2}_{\mathcal{E}_k},
\label{eq:loss-change}
\end{split}
\end{equation}
where $\tau_{\mathrm{min}}=\min_{i\in\mathcal{C}} \tau_i$, $\tau_{\mathrm{max}}=\max_{i\in\mathcal{C}} \tau_i$,  $J_k \!\triangleq\! \mathbb{E} \| \nabla\mathbf{\hat{F}}_{\Tilde{k}} \bm{\Tilde{m}}^\mathrm{T} \|^2$, and $\nabla\mathbf{\hat{F}}_{\Tilde{k}} \!\triangleq\! [ \sum_{i\in\mathcal{C}_d} \frac{\hat{m}}{\tau_i} \sum_{l=0}^{\tau_i-1} \nabla F_i (\bm{w}_{\Tilde{k},l}^{(i)}) ]_{d\in\mathcal{D}}$.
\end{lemma}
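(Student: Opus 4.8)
The plan is to run the standard smoothness-based descent argument on the auxiliary sequence $\{\bm{\overline{y}}_k\}$ and then isolate the quantities $J_k$ and $\mathcal{E}_k$. Since $F=\sum_{i\in\mathcal{C}}m_iF_i$ is a convex combination of $L$-smooth functions, it is itself $L$-smooth, so from \eqref{eq:server-change} (written with the \emph{delayed} gradient matrix $\mathbf{\hat{G}}_{\Tilde{k}}$ that actually drives the update, which is well defined because Lemma~\ref{lemma-delta} bounds every $\delta_k^{(d)}$ by $\delta_{\mathrm{max}}$) we get
\begin{equation*}
F(\bm{\overline{y}}_{k+1})\le F(\bm{\overline{y}}_k)-\eta\big\langle\nabla F(\bm{\overline{y}}_k),\,\mathbf{\hat{G}}_{\Tilde{k}}\mathbf{\Lambda}\bm{\Tilde{m}}^{\mathrm{T}}\big\rangle+\frac{L\eta^2}{2}\big\|\mathbf{\hat{G}}_{\Tilde{k}}\mathbf{\Lambda}\bm{\Tilde{m}}^{\mathrm{T}}\big\|^2 .
\end{equation*}
I would then take expectation conditioned on the iterates at which the mini-batch gradients are evaluated: by unbiasedness \eqref{eq-gradient} the cross term becomes $-\eta\langle\nabla F(\bm{\overline{y}}_k),\nabla\mathbf{\hat{F}}_{\Tilde{k}}\mathbf{\Lambda}\bm{\Tilde{m}}^{\mathrm{T}}\rangle$, and the quadratic term decomposes into a deterministic part $\|\nabla\mathbf{\hat{F}}_{\Tilde{k}}\mathbf{\Lambda}\bm{\Tilde{m}}^{\mathrm{T}}\|^2$ plus the variance of the zero-mean noise $(\mathbf{\hat{G}}_{\Tilde{k}}-\nabla\mathbf{\hat{F}}_{\Tilde{k}})\mathbf{\Lambda}\bm{\Tilde{m}}^{\mathrm{T}}$.

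For the noise term, the key is that the batches $\bm{\xi}_{\Tilde{k},l}^{(i)}$ are sampled independently across client nodes and local epochs, so the variance equals the sum of the per-$(i,l)$ variances. Bounding each summand by $\sigma^2$ via \eqref{eq-variance}, using $\Tilde{m}_d\hat{m}_i=m_i$ for $i\in\mathcal{C}_d$, together with the elementary bounds $\tau_{\mathrm{min}}\le\overline{\tau}_d\le\tau_{\mathrm{max}}$, $\tau_i\ge\tau_{\mathrm{min}}$, and $\tau_{\mathrm{max}}/\tau_{\mathrm{min}}=H$ (which follows from $\tau_i=\beta h_i$ and the definition of the heterogeneity gap), the variance contributes exactly $\tfrac12\eta^2L\tau_{\mathrm{max}}H\sum_{i\in\mathcal{C}}m_i^2\sigma^2$. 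The deterministic part $\|\nabla\mathbf{\hat{F}}_{\Tilde{k}}\mathbf{\Lambda}\bm{\Tilde{m}}^{\mathrm{T}}\|^2$ I would bound using $\overline{\tau}_d\le\tau_{\mathrm{max}}$, which after expectation gives the term $\tfrac12\eta^2L\tau_{\mathrm{max}}^2J_k$.

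For the cross term I would extract the common scalar $\tau_{\mathrm{min}}$ from $\mathbf{\Lambda}$ and then invoke the polarization identity $2\langle\bm{a},\bm{b}\rangle=\|\bm{a}\|^2+\|\bm{b}\|^2-\|\bm{a}-\bm{b}\|^2$ with $\bm{a}=\nabla F(\bm{\overline{y}}_k)$ and $\bm{b}=\nabla\mathbf{\hat{F}}_{\Tilde{k}}\bm{\Tilde{m}}^{\mathrm{T}}$. This produces $-\tfrac12\eta\tau_{\mathrm{min}}\|\nabla F(\bm{\overline{y}}_k)\|^2$, the piece $-\tfrac12\eta\tau_{\mathrm{min}}J_k$, and $+\tfrac12\eta\tau_{\mathrm{min}}\mathcal{E}_k$. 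Merging the two $J_k$ contributions into $-\tfrac{\eta}{2}(\tau_{\mathrm{min}}-\eta L\tau_{\mathrm{max}}^2)J_k$, taking the full expectation over all randomness, and moving $\mathbb{E}F(\bm{\overline{y}}_k)$ to the left-hand side yields \eqref{eq:loss-change}.

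The main obstacle is the joint bookkeeping of the staleness and the cluster-dependent normalization $\overline{\tau}_d$. On the staleness side I must fix a filtration under which the delayed gradients are conditionally unbiased and keep track of which global index each client's model carries (Lemma~\ref{lemma-delta} is precisely what makes this legitimate and the resulting quantities finite). On the normalization side, replacing the diagonal matrix $\mathbf{\Lambda}$ by the scalars $\tau_{\mathrm{min}}$ in the cross term and $\tau_{\mathrm{max}}$ in the quadratic term must be done in the sign-consistent direction so that every inequality in the descent chain is preserved, and the three weighting systems $\hat{m}_i$ within a cluster, $m_i$ globally, and $\Tilde{m}_d$ across clusters, linked by $\Tilde{m}_d\hat{m}_i=m_i$, must be kept straight throughout; this, rather than the algebra, is where the care is needed.
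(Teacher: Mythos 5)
Your proposal follows the same route the paper sketches: plug the update \eqref{eq:server-change} into the $L$-smoothness (first-order Taylor) bound, use unbiasedness \eqref{eq-gradient} to turn the cross term into an inner product of true gradients, apply the polarization identity to split it into the $\|\nabla F(\bm{\overline{y}}_k)\|^2$, $J_k$, and $\mathcal{E}_k$ pieces, and use the bounded-variance assumption \eqref{eq-variance} for the stochastic noise. The paper's own proof is only this two-line sketch, and your more detailed account --- including the correct observation that the update is actually driven by the delayed gradients $\mathbf{\hat{G}}_{\Tilde{k}}$ and the explicit caveat that replacing $\mathbf{\Lambda}$ by $\tau_{\mathrm{min}}$ in the cross term and $\tau_{\mathrm{max}}$ in the quadratic term must be done in the sign-consistent direction --- is consistent with it.
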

\begin{proof}
The proof can be obtained by plugging the right-hand side (RHS) of \eqref{eq:server-change} into the first-order Taylor expansion of $F(\bm{\overline{y}}_{k+1})$.
Then we apply the assumptions in \eqref{eq-smooth} and \eqref{eq-variance} to conclude the proof.
\end{proof}

The term $\mathcal{E}_k$ in \eqref{eq:loss-change} measures the degree to which the gradients collected from client nodes (i.e., $\nabla\mathbf{\hat{F}}_{\Tilde{k}} \bm{\Tilde{m}}^\mathrm{T}$) deviate from the desired gradient of global model (i.e., $\nabla F(\bm{\overline{y}}_k)$). In the following lemma, we derive an upper bound for $\mathcal{E}_k$.
\begin{lemma}\label{lemma-2}
With Assumption \ref{ass-1}, we have
\begin{equation}
\begin{split}
    &\quad \frac{1}{K} \! \sum_{k=0}^{K-1} \mathcal{E}_k
    \leq A (\tau_\mathrm{max},\delta_\mathrm{max},H) \sigma^2 \\
    & + B (\tau_\mathrm{max},\delta_\mathrm{max},H) \kappa^2 
    + C (\tau_\mathrm{max},\delta_\mathrm{max}) J_k,
\end{split}
\label{eq-lemma}
\end{equation}
where 
\begin{equation*}
\begin{split}
    & A (\tau_\mathrm{max},\delta_\mathrm{max},H) \triangleq 4 \eta^2 L^2 \delta_\mathrm{max}^2 \tau_{\mathrm{max}} H U_4 +\\
    &\frac{4\eta^2L^2 (\tau_\mathrm{max}-1)}{1-2\eta^2L^2 U_2} 
    \!+\! 8\eta^2 L^2 \tau_{\mathrm{max}} H U_3  \frac{1}{K} \!\sum_{k=0}^{K-1} \sum_{s=0}^{k-1} \rho_{s,k-1}^2, \\ 
    & B (\tau_\mathrm{max},\delta_\mathrm{max},H) \triangleq 8 \eta^2 L^2 \delta_\mathrm{max}^2 \tau_{\mathrm{max}} H U_4 +\\
    &\!\frac{24\eta^2L^2U_2}{1 \!-\! 2\eta^2L^2 U_2}
    \!+\! 16 \eta^2 L^2 \tau_{\mathrm{max}} H U_3 \frac{1}{K} \!\sum_{k=0}^{K-1} \bigg( \sum_{s=0}^{k-1} \rho_{s,k-1} \bigg)^2, \\ 
    & C (\tau_\mathrm{max},\delta_\mathrm{max},H) \triangleq 8 \eta^2 L^2 \delta_\mathrm{max}^2 \tau_\mathrm{max} U_4 \\ \vspace{-1pt}
    &+ 16 \eta^2 L^2 \tau_\mathrm{max}^2 U_3 \frac{1}{K} \!\sum_{k=s+1}^{K-1} \rho_{s,k-1} \bigg( \sum_{l=0}^{k-1} \rho_{l,k-1} \bigg), \\ 
    & U_2 \!\triangleq\! \tau_\mathrm{max} (\tau_\mathrm{max}\!-\!1), U_3 \!\triangleq\! \frac{1\!+\!4\eta^2L^2 U_2}{1\!-\!2\eta^2L^2 U_2}, U_4 \!\triangleq\! \frac{1 \!+\! 22\eta^2L^2 U_2}{1\!-\!2\eta^2L^2 U_2}.
\end{split}
\end{equation*}
\end{lemma}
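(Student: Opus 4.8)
The plan is to bound $\mathcal{E}_k$ by separating the three sources of discrepancy between the ideal global gradient $\nabla F(\bm{\overline{y}}_k)$ and the aggregated stale client gradients $\nabla\mathbf{\hat{F}}_{\Tilde{k}}\bm{\Tilde{m}}^\mathrm{T}$. Since $\Tilde{m}_d\hat{m}_i = m_i$, we may rewrite $\nabla\mathbf{\hat{F}}_{\Tilde{k}}\bm{\Tilde{m}}^\mathrm{T} = \sum_{i\in\mathcal{C}} m_i \frac{1}{\tau_i}\sum_{l=0}^{\tau_i-1}\nabla F_i(\bm{w}_{\Tilde{k},l}^{(i)})$; then, using $\nabla F = \sum_i m_i\nabla F_i$, Jensen's inequality with weights $\{m_i\}$ (which sum to one), and the $L$-smoothness \eqref{eq-smooth},
\[
\mathcal{E}_k \le L^2 \sum_{i\in\mathcal{C}} \frac{m_i}{\tau_i} \sum_{l=0}^{\tau_i-1} \mathbb{E}\big\| \bm{\overline{y}}_{k} - \bm{w}_{\Tilde{k},l}^{(i)} \big\|^{2}.
\]
For $i\in\mathcal{C}_d$, I would then split $\bm{\overline{y}}_k - \bm{w}_{\Tilde{k},l}^{(i)}$ into (i) the global \emph{staleness drift} $\bm{\overline{y}}_k - \bm{\overline{y}}_{\Tilde{k}}$, (ii) the \emph{inter-cluster consensus error} $\bm{\overline{y}}_{\Tilde{k}} - \bm{y}_{\Tilde{k}}^{(d)}$, and (iii) the \emph{local-update drift} $\bm{y}_{\Tilde{k}}^{(d)} - \bm{w}_{\Tilde{k},l}^{(i)}$ (recall $\bm{w}_{\Tilde{k},0}^{(i)}=\bm{y}_{\Tilde{k}}^{(d)}$), and control each by three times its squared norm.

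For term (i), unrolling \eqref{eq:server-change} over the at most $\delta_\mathrm{max}$ iterations separating $\Tilde{k}$ from $k$ (bounded by Lemma \ref{lemma-delta}) gives $\bm{\overline{y}}_k - \bm{\overline{y}}_{\Tilde{k}} = -\eta\sum_{s=\Tilde{k}}^{k-1}\mathbf{\hat{G}}_s\mathbf{\Lambda}\bm{\Tilde{m}}^\mathrm{T}$; taking expectations, applying Cauchy--Schwarz over the $\le\delta_\mathrm{max}$ summands, and invoking \eqref{eq-gradient}--\eqref{eq-variance} together with $\tau_i\le\tau_\mathrm{max}$ and $1/\tau_\mathrm{min}\le H/\tau_\mathrm{max}$, each $\mathbb{E}\|\mathbf{\hat{G}}_s\mathbf{\Lambda}\bm{\Tilde{m}}^\mathrm{T}\|^2$ decomposes into a variance part $\propto\sigma^2$, a non-IID part $\propto\kappa^2$ and a term $\propto J_s$; after summation this yields the $\delta_\mathrm{max}^2\tau_\mathrm{max}H$ contributions in $A$ and $B$ and the $\delta_\mathrm{max}^2\tau_\mathrm{max}$ contribution in $C$. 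For term (iii), writing $\bm{y}_{\Tilde{k}}^{(d)} - \bm{w}_{\Tilde{k},l}^{(i)} = \eta\sum_{l'=0}^{l-1} g(\bm{\xi}_{\Tilde{k},l'}^{(i)};\bm{w}_{\Tilde{k},l'}^{(i)})$ and bounding its second moment via \eqref{eq-variance}, \eqref{kappa}, $L$-smoothness, and relating $\nabla F_i(\bm{w}_{\Tilde{k},l'}^{(i)})$ back to $\nabla F(\bm{\overline{y}}_{\Tilde{k}})$ produces a self-referential inequality for $\sum_l\mathbb{E}\|\bm{y}_{\Tilde{k}}^{(d)} - \bm{w}_{\Tilde{k},l}^{(i)}\|^2$; resolving it, which requires $2\eta^2L^2 U_2<1$ with $U_2=\tau_\mathrm{max}(\tau_\mathrm{max}-1)$, introduces the $\tfrac{1}{1-2\eta^2L^2 U_2}$ denominators and the constants $U_3$, $U_4$ that appear in $A$, $B$, $C$.

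Term (ii) is the technically heaviest. Using \eqref{eq:intra}--\eqref{eq:pk} I would express $\bm{y}_{\Tilde{k}}^{(d)}$ as a product of the row-stochastic mixing matrices $\mathbf{P}_s$ applied to the history of intra-cluster updates, so that the consensus error $\bm{\overline{y}}_{\Tilde{k}} - \bm{y}_{\Tilde{k}}^{(d)}$ telescopes into $\sum_s$ (contraction factor from step $s$ to the relevant past iteration) $\times$ (intra-cluster update injected at step $s$), the contraction factor being precisely the quantity $\rho_{s,k-1}$ (a product of spectral gaps of the $\mathbf{P}_s$'s, with the sum range padded to $k-1$ using the staleness bound); squaring, applying the weighted Cauchy--Schwarz bound $\big(\sum_s\rho_{s,k-1}a_s\big)^2\le\big(\sum_s\rho_{s,k-1}\big)\big(\sum_s\rho_{s,k-1}a_s^2\big)$, and bounding the injected updates again by $\sigma^2$, $\kappa^2$, $J_s$ pieces gives the $\tau_\mathrm{max}H U_3\sum_s\rho_{s,k-1}^2$, $\tau_\mathrm{max}H U_3(\sum_s\rho_{s,k-1})^2$, and $\tau_\mathrm{max}^2 U_3\,\rho_{s,k-1}(\sum_l\rho_{l,k-1})$ terms. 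Finally I would average over $k=0,\dots,K-1$ and exchange the order of the double sums $\sum_k\sum_s$ to obtain exactly the combinations $\tfrac{1}{K}\sum_k\sum_s\rho_{s,k-1}^2$, $\tfrac{1}{K}\sum_k(\sum_s\rho_{s,k-1})^2$, and $\tfrac{1}{K}\sum_{k>s}\rho_{s,k-1}(\sum_l\rho_{l,k-1})$ in $A$, $B$, $C$, and collect the coefficients of $\sigma^2$, $\kappa^2$ and $J_k$. The main obstacle I anticipate is carrying this bookkeeping through for staleness, multi-step local drift, and time-varying mixing simultaneously — in particular closing the self-referential recursion in (iii) so the $1/(1-2\eta^2L^2U_2)$ factors emerge cleanly, and aligning the index ranges in (ii) so the $\rho$-products match the claimed constants.
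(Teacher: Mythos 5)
Your proposal follows essentially the same route as the paper's proof: the paper likewise reduces $\mathcal{E}_k$ via $L$-smoothness and Jensen's inequality to the three-way decomposition into the staleness drift $\bm{\overline{y}}_k - \bm{\overline{y}}_{\Tilde{k}}$ (bounded via Lemma \ref{lemma-delta}), the inter-cluster consensus error $\bm{\overline{y}}_{\Tilde{k}} - \bm{y}_{\Tilde{k}}^{(d)}$ (controlled through the operator norms $\rho_{s,k-1}$ of the mixing-matrix products), and the intra-cluster local-update drift $\bm{y}_{\Tilde{k}}^{(d)} - \bm{w}_{\Tilde{k},l}^{(i)}$. The only cosmetic difference is that the paper applies the two-term inequality twice to obtain coefficients $2L^2$, $4L^2$, $4L^2$ on the three terms rather than your uniform $3L^2$, so your bookkeeping would yield slightly different (but equally valid) numerical constants in $A$, $B$, $C$.
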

\begin{proofsketch}
We first derive an upper bound for $\mathcal{E}_k$ by applying \eqref{eq-smooth} and Jensen's inequality (i.e., $\|\bm{a}+\bm{b} \|^2 \leq 2\| \bm{a} \|^2 + 2\| \bm{b} \|^2, \forall \bm{a},\bm{b} \in\mathbb{R}^d$) as follows:
\begin{equation}
\begin{split}
    & \mathcal{E}_k \leq 2 L^2 \underbrace{\left\| \bm{\overline{y}}_k - \bm{\overline{y}}_{\Tilde{k}} \right\|^2}_{\mathcal{E}_{k,1}} + 4L^2 \underbrace{\sum_{d\in \mathcal{D}} \Tilde{m}_d \mathbb{E}\left\| \bm{\overline{y}}_{\Tilde{k}} - \bm{y}_{\Tilde{k}}^{(d)} \right\|^2}_{\mathcal{E}_{k,2}} \\ 
    & + 4L^2 \sum_{d\in \mathcal{D}} \Tilde{m}_d \underbrace{ \sum_{i\in \mathcal{C}_d} \frac{\hat{m}_i}{\tau_i} \sum_{l=0}^{\tau_i-1} \mathbb{E}\left\| \bm{y}_{\Tilde{k}}^{(d)} - \bm{w}_{\Tilde{k},l}^{(i)} \right\|^2}_{\mathcal{E}_{k,3}}. 
\end{split}
\label{eq:respective}
\end{equation}
We respectively bound the three terms in the RHS of \eqref{eq:respective} by bounding the accumulated gradients computed at the client nodes.
The term $\mathcal{E}_{k,1}$ is caused by the iteration gap, upper bounded by using Lemma \ref{lemma-delta}.
The term $\mathcal{E}_{k,2}$ measures the consensus error among edge clusters, which can be characterized by the accumulated divergence between the mixing matrix and the expected average, denoted as $\rho_{s,k-1} \triangleq \left\| \prod_{l=s}^{k-1} \mathbf{P}_l - \bm{\Tilde{m}}\mathbf{1} \right\|_\text{op}$, where $\left\| \cdot \right\|_\text{op}$ is the operator norm of the matrix. Denote the maximum value of the second largest eigenvalue of any $\mathbf{P}_k$ as $\rho_\text{max}$, then $\sum_{s=0}^{k-1} \rho_{s,k-1} \!\leq\! \frac{1}{1-\rho_\text{max}}$ holds.
The term $\mathcal{E}_{k,3}$ measures the model divergence within any edge cluster $d$, which is introduced by biased local training.
The details are omitted due to space limitation.
\end{proofsketch}

With Lemma \ref{lemma-1} and Lemma \ref{lemma-2}, we show the convergence of asynchronous SD-FEEL in the following theorem.
\begin{theorem} \label{thm}
With Assumption \ref{ass-1}, if the learning rate $\eta$ satisfies
    \begin{equation}
    1 - \eta L  H \tau_\mathrm{max} - C  (\tau_\mathrm{max},\delta_\mathrm{max},H) \geq 0, 1 - 2\eta^2L^2 U_2 > 0,
    \label{eq:lr}
    \end{equation}
    we have
    \begin{align}
    &\quad \frac{1}{K} \sum_{k=0}^{K-1} \mathbb{E} \left\| \nabla F(\bm{\overline{y}}_k) \right\|^2 \nonumber\\
    &\leq\frac{2 [\mathbb{E}F(\bm{\overline{y}}_{0}) - \mathbb{E} F(\bm{\overline{y}}_{K})]}{\eta \tau_\mathrm{min} U_1 K}
    + \frac{1}{U_1} \eta L H^2 \sum_{i\in\mathcal{C}} m_i^2 \sigma^2 \quad
    \label{eq:thm}
    \end{align}
    \vspace{-7pt}
    \begin{equation*}
        + A ( \tau_\mathrm{max},\delta_\mathrm{max},H) \frac{\sigma^2}{U_1} + B (\tau_\mathrm{max},\delta_\mathrm{max},H) \frac{\kappa^2}{U_1},
    \end{equation*}
    where $U_1 \!\triangleq\! \frac{1-14\eta^2L^2 U_2}{1-2\eta^2L^2 U_2}$.
\end{theorem}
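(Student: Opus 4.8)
The plan is to combine Lemma~\ref{lemma-1} and Lemma~\ref{lemma-2} and telescope over the iterations $k=0,\dots,K-1$. First I would take Lemma~\ref{lemma-1}, sum both sides over $k$, and divide by $K$. The left-hand side telescopes to $\frac{1}{K}[\mathbb{E}F(\bm{\overline{y}}_K) - \mathbb{E}F(\bm{\overline{y}}_0)]$. On the right-hand side the critical move is to substitute the averaged bound on $\frac{1}{K}\sum_k \mathcal{E}_k$ from Lemma~\ref{lemma-2} into the $\frac{1}{2}\eta\tau_\mathrm{min}\cdot\frac{1}{K}\sum_k\mathcal{E}_k$ term. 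This produces terms proportional to $A\sigma^2$, $B\kappa^2$, and $C\,J_k$ (with the understanding that the $J_k$-type contributions are really averaged sums that have been collapsed into a single symbol in the statement).

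Next I would collect all the coefficients multiplying $J_k$ (equivalently $\frac{1}{K}\sum_k J_k$). From Lemma~\ref{lemma-1} the native $J_k$ coefficient is $-\frac{\eta}{2}(\tau_\mathrm{min}-\eta L\tau_\mathrm{max}^2)$, which is nonpositive once $\eta$ is small; from the substitution of Lemma~\ref{lemma-2} we pick up an extra $+\frac{1}{2}\eta\tau_\mathrm{min} C(\tau_\mathrm{max},\delta_\mathrm{max},H)\,J_k$. The idea is that the first learning-rate condition in \eqref{eq:lr}, namely $1-\eta L H\tau_\mathrm{max}-C\geq 0$, is exactly what guarantees that the total $J_k$ coefficient is $\leq 0$, so that entire term can be dropped from the upper bound (one uses $\tau_\mathrm{min}\leq\tau_\mathrm{max}$ and $\tau_\mathrm{max}\leq H\tau_\mathrm{min}$, i.e.\ the heterogeneity gap, to reconcile $\tau_\mathrm{max}^2$ with $H\tau_\mathrm{max}$). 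After discarding the $J_k$ term, what remains on the right is $-\frac{1}{2}\eta\tau_\mathrm{min}(1-\text{something})\cdot\frac{1}{K}\sum_k\mathbb{E}\|\nabla F(\bm{\overline{y}}_k)\|^2$ plus the $\sigma^2$ and $\kappa^2$ noise terms; the factor in front is where $U_1 = \frac{1-14\eta^2L^2U_2}{1-2\eta^2L^2U_2}$ emerges, the $14\eta^2L^2U_2$ coming from the residual second-order pieces inside $A$, $B$ (note $B$ contains $\frac{24\eta^2L^2U_2}{1-2\eta^2L^2U_2}$ and similar), and the denominator being positive by the second condition $1-2\eta^2L^2U_2>0$.

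Finally I would rearrange: move the gradient-norm average to the left with coefficient $\frac{1}{2}\eta\tau_\mathrm{min}U_1$, divide through, and bound $-[\mathbb{E}F(\bm{\overline{y}}_K)-\mathbb{E}F(\bm{\overline{y}}_0)] = \mathbb{E}F(\bm{\overline{y}}_0)-\mathbb{E}F(\bm{\overline{y}}_K)$. Dividing the $\frac{1}{2}\eta^2 L\tau_\mathrm{max}H\sum_i m_i^2\sigma^2$ term by $\frac{1}{2}\eta\tau_\mathrm{min}U_1$ and using $\tau_\mathrm{max}\leq H\tau_\mathrm{min}$ yields the $\frac{1}{U_1}\eta L H^2\sum_i m_i^2\sigma^2$ term, while the $A\sigma^2$ and $B\kappa^2$ terms divided by the same factor give $A\sigma^2/U_1$ and $B\kappa^2/U_1$ after again absorbing a $\tau_\mathrm{max}/\tau_\mathrm{min}\le H$ into the constants $A,B$ (which already carry $H$ dependence). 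The main obstacle I anticipate is the bookkeeping around the $J_k$ terms: Lemma~\ref{lemma-2}'s bound on $\frac{1}{K}\sum_k\mathcal{E}_k$ carries a $C\,J_k$ contribution that is itself an implicit average, and one must verify that after substitution the \emph{net} coefficient of $\frac{1}{K}\sum_k J_k$ is sign-definite under \eqref{eq:lr} before it can be thrown away — getting the $\tau_\mathrm{min}$ versus $\tau_\mathrm{max}$ versus $H$ conversions exactly right in that step is the delicate part, everything else being routine rearrangement.
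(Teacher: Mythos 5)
Your proposal follows essentially the same route as the paper's own (very terse) proof: sum the bound of Lemma~\ref{lemma-1} over $k$, divide by $K$, telescope the loss differences, substitute the averaged bound of Lemma~\ref{lemma-2} for the $\mathcal{E}_k$ term, use the first condition in \eqref{eq:lr} to make the net $J_k$ coefficient nonpositive so it can be dropped, and rearrange to isolate the averaged gradient norm with the factor $U_1$. Your additional bookkeeping on the $\tau_\mathrm{min}$ versus $\tau_\mathrm{max}$ versus $H$ conversions (via $\tau_\mathrm{max}/\tau_\mathrm{min}\le H$) is exactly the detail the paper elides, and it is correctly identified.
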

\begin{proof}
We sum up both sides of \eqref{eq:loss-change} over $k\!=\!0,1,\dots,K\!-\!1$, divide them by $K$, and apply \eqref{eq-lemma} to the RHS. By rearranging terms and choosing the learning rate in \eqref{eq:lr}, we complete the proof.
\end{proof}

\begin{remark}\label{rm:lr}
If we choose the learning rate as $\eta=\mathcal{O} \left(\frac{1}{L\sqrt{K}} \right)$, the first two terms in the RHS of \eqref{eq:thm} dominate and decreases at a speed of $\mathcal{O}(\frac{1}{\sqrt{K}})$, i.e., $K \!\rightarrow\! \infty$, the RHS of \eqref{eq:thm} approaches zero, which ensures the convergence.
\end{remark}

\begin{remark}\label{rm:async}
When $H\!=\!1$ and $\delta_\text{max}\!=\!0$, the result in Theorem \ref{thm} reduces to the synchronous case \cite{sun2021semi}. Compared with this case, the RHS of \eqref{eq:thm} incorporates additional terms that increases with the heterogeneity gap $H$ and the iteration gap $\delta_\text{max}$, implying that SD-FEEL with larger device heterogeneity leads to slower convergence. Besides, severe data heterogeneity (i.e., large $\kappa$) exacerbates the error in the RHS of \eqref{eq:thm} and slows down the convergence as in synchronous training.
\end{remark}

\section{Simulation Results}\label{sec:experiment}
\subsection{Setup}
An SD-FEEL system with 30 client nodes being divided into six edge clusters is considered in our simulations, and each edge server is associated with five client nodes. Without loss of generality, we assume the connection among the edge servers forms a ring topology. 
We compare synchronous and asynchronous SD-FEEL on the CIFAR-10 image classification task, where the ResNet-18 model \cite{he2016deep} with $M=11,173,962$ parameters is trained. 
To simulate the data heterogeneity, we adopt a Dirichlet distribution $\text{Dir}_{30}(0.5)$ to sample the probabilities $\{p_{l,i}\}$'s, which is the proportion of the training samples of class $l$ to the $i$-th client node \cite{yurochkin2019bayesian}. Besides, the batch size and the learning rate in mini-batch SGD are set to be 10 and 0.001, respectively. As an example, we use $\psi(\delta_k^{(j)})=\frac{1}{2 (\delta_k^{(j)}+1)}$ to calculate the mixing matrix for inter-cluster model aggregation.

The training latency for one iteration of edge cluster $d$ can be expressed as $T_\text{iter}^{(d)} = T_\text{comm}^\text{ct-sr} + T_\text{comm}^\text{sr-sr} + T_\text{comp}^{(d)}$, where $T_\text{comm}^\text{ct-sr}=\frac{M_{\rm{bit}}}{R_\text{comm}^\text{ct-sr}}$ and $T_\text{comm}^\text{sr-sr}=\frac{M_{\rm{bit}}}{R_\text{comm}^\text{sr-sr}}$ are respectively the delays for transmitting a model with $32M \,\text{bits}$ from the client node to the edge server and between two edge servers. Following the setting in our previous work \cite{sun2021semi}, the client nodes upload the updates to the edge server using a wireless channel and the transmission rate is given by $R^\text{ct-sr} = 5\,\text{Mbps}$. For the inter-server communication, it is via the high-speed links with the bandwidth of $10\,\text{Mbps}$. The average computation latency for one local epoch is formulated as $T_\text{comp}^\text{avg} = \frac{N_\text{MAC}}{C_\text{CPU}}$, where $N_\text{MAC} = 55.67\,\text{GFLOPs}$ is the required number of the floating-point operations (FLOPs), and $C_\text{CPU} = 1 \,\text{GFLOPS}$. The value of $T_\text{comp}^{(d)}$ in each cluster is respectively set to ensure at least $100$ mini-batches to be processed on client nodes.

\subsection{Results}\label{sec:result}
\begin{figure}[t]
\setlength\abovecaptionskip{-1pt}
\setlength\belowcaptionskip{-8pt}
    \centering
    \subfigure
    {\includegraphics[width=0.49\linewidth]{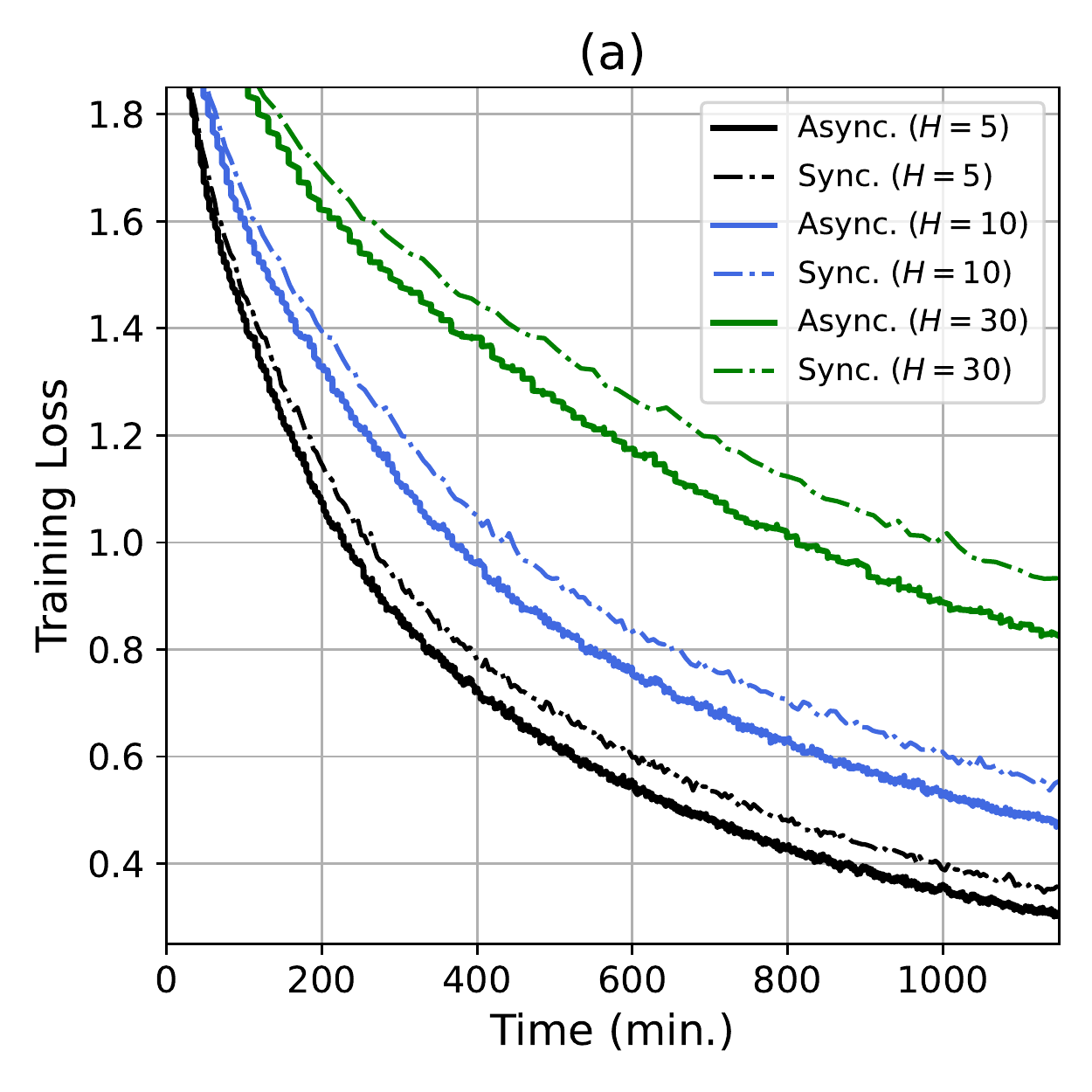}}
    \subfigure
    {\includegraphics[width=0.49\linewidth]{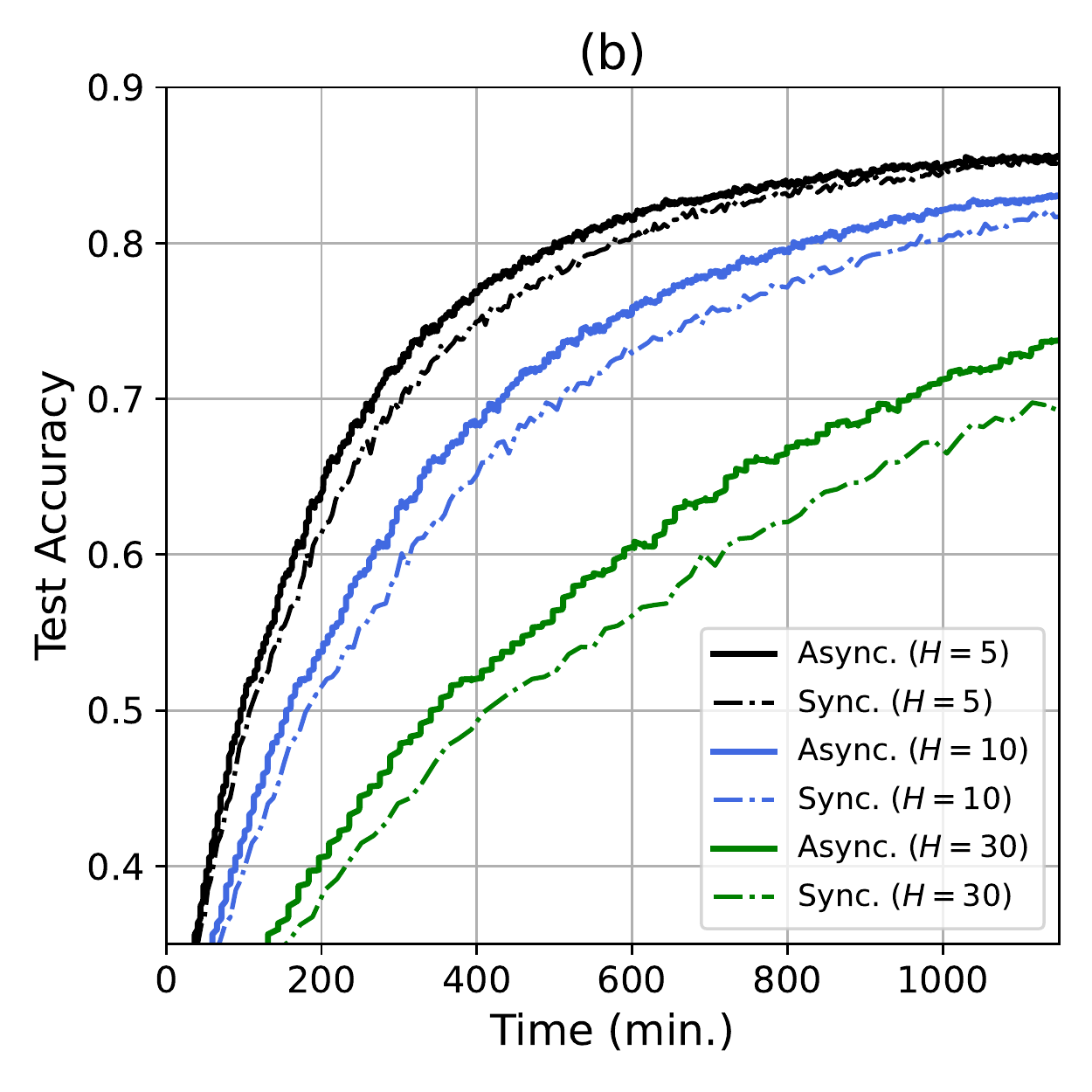}}
    \caption{(a) Training loss and (b) test accuracy over time with different heterogeneity gaps.}
    \label{fig:gap}
\end{figure}
Fig. \ref{fig:gap} shows the training loss and test accuracy over time with different degrees of device heterogeneity (i.e. $H=5$, $10$ and $30$). It is observed that in all the cases asynchronous SD-FEEL (denoted as \textit{Async.}) enjoys a faster convergence speed compared with synchronous training (denoted as \textit{Sync.}).
With a large value of $H$, slow client nodes have weaker computation capabilities and thus it takes longer for them to complete local training, which hinders the convergence. Nevertheless, asynchronous training effectively increases training efficiency by allowing faster client nodes to perform more local epochs and reducing their idle time.
According to Fig. \ref{fig:gap}(b), within a relatively short training time, asynchronous SD-FEEL obtains an improvement in the test accuracy compared with the baseline. However, if given sufficiently long training time such that the slow client nodes are utilized to a greater extent, synchronous SD-FEEL can reach similar test accuracies and may even perform better than asynchronous training algorithm.

\section{Conclusions}\label{sec:conclusion}
In this paper, we considered a practical scenario of SD-FEEL with device heterogeneity and designed an asynchronous training algorithm. We provided the convergence analysis and showed that the proposed asynchronous training algorithm secures notable improvement on convergence speed compared with synchronous training through numerical experiments. For future works, it is worth investigating how to determine the optimal local training time for each edge cluster. In addition, extending the proposed algorithm and analysis to SD-FEEL systems with dynamic computation and communication speeds at the client nodes is also necessary.

\bibliographystyle{IEEEtran} 
\bibliography{ref.bib}

\end{document}